\theoremstyle{remark}
 \let\MYoriglatexcaption\caption
 \renewcommand{\caption}[2][\relax]{\MYoriglatexcaption[#2]{#2}}
\newcommand{\RN}[1]{%
	\textup{\uppercase\expandafter{\romannumeral#1}}%
}
\newtheoremstyle{mystyle}
  {}
  {}
  {\itshape}
  {}
  {\bfseries}
  {.}
  { }
  {}
\theoremstyle{mystyle}
\newtheorem{theorem}{Theorem}{}
\newtheorem{proposition}{Proposition}{}
{}
{}
\newtheorem{defi}{Definition}{}
\newtheorem{remark}{Remark}{}
\begin{document}
%
\title{Binary Deterministic Sensing Matrix Construction Using Manifold Optimization}

\author{Mohamad Mahdi~Mohades, Hossein~Mohades and S. Fatemeh Zamanian \thanks{M. M. Mohades and S. F. Zamanian are with the School of Electrical Engineering, Shahrekord University, Shahrekord-Iran (e-mail: mohammadmahdi.mohades@sku.ac.ir; zamanian73@gmail.com)\\H. Mohades is with School of Mathematics, Institute for Research in Fundamental Sciences (IPM),
P.O. Box 19395-5746, Tehran - Iran; and University of Farhangian, Tehran - Iran (e-mail:  hoseinmohades@ipm.ir).}}%

\maketitle

\begin{abstract}
Binary deterministic sensing matrices are highly desirable for sampling sparse signals, as they require only a small number of sum-operations to generate the measurement vector. Furthermore, sparse sensing matrices enable the use of low-complexity algorithms for signal reconstruction.
In this paper, we propose a method to construct low-density binary deterministic sensing matrices by formulating a manifold-based optimization problem on the statistical manifold. The proposed matrices can be of arbitrary sizes, providing a significant advantage over existing constructions. We also prove the convergence of the proposed algorithm.
The proposed binary sensing matrices feature low coherence and constant column weight. Simulation results demonstrate that our method outperforms existing binary sensing matrices in terms of reconstruction percentage and signal to noise ratio (SNR).
\end{abstract}

\begin{IEEEkeywords}
Binary sensing matrix, Manifold optimization, Column regular, Gradient descent.
\end{IEEEkeywords}

%
\IEEEpeerreviewmaketitle

\section{Introduction}\label{Introduction}
\IEEEPARstart{C}{ompressive} sensing (CS) is a technique for sampling sparse signals at rates below the Nyquist-Shannon rate \cite{Donoho_1}. Sparsity implies that the number of nonzero elements in a signal is significantly smaller than the total number of elements. For reliable sampling that ensures unique and exact reconstruction, specific conditions must be met. Let ${\bf{y}}_{m\times 1}={\bf{A}}_{m\times n}{\bf{x}}_{n\times 1}$, where ${\bf{y}}\in{\mathbb{R}}^{m\times 1}$ represents the measurement vector, ${\bf{A}}\in{\mathbb{R}}^{m\times n}$ is the sampling operator, and ${\bf{x}}\in{\mathbb{R}}^{n\times 1}$ is a sparse signal with $m\ll n$. If ${\bf{x}}$ contains $k$ nonzero elements, it is referred to as a $k$-sparse signal. 

The goal of CS is to reconstruct ${\bf{x}}$ from ${\bf{y}}$ and ${\bf{A}}$, identifying the conditions under which unique and exact reconstruction is achievable. The optimization problem for reconstructing ${\bf{x}}$ is\cite{Candes_2}:
\begin{equation}\label{Norm_0}
\mathop {\min }\limits_{{\bf{z}} \in {^n}} \,\,\,\,{\left\| {\bf{z}} \right\|_0}\,\,\,\,\,\,{\rm{s}}{\rm{.t}}{\rm{.}}\,\,\,\,\,\,\,{\bf{y}} = {\bf{Az}},
\end{equation}
where ${\left\| \cdot \right\|_0}$ denotes the $l_0$-norm, counting the number of nonzero elements. It is proven that Problem (\ref{Norm_0}) has a unique solution if the sampling matrix ${\bf{A}}$ satisfies the null space property (NSP). NSP asserts that no $2k$-sparse signal can reside in the null space of ${\bf{A}}$, where $k$ is the sparsity level of ${\bf{x}}$.
Since Problem (\ref{Norm_0}) is NP-hard \cite{Ge_3}, an alternative optimization problem has been proposed for sparse signal reconstruction \cite{Candes_4}:
\begin{equation}\label{Norm_1}
\mathop {\min }\limits_{{\bf{z}} \in \mathbb{R}^n}  \,\,\,\, {\left\| {\bf{z}} \right\|_1} \,\,\,\,\rm{s.t.}  \,\,\,\, {\bf{y}} = {\bf{Az}},
\end{equation}
where ${\left\| \cdot \right\|_1}$ represents the $l_1$-norm. It has been shown that if ${\bf{A}}$ satisfies the restricted isometry property (RIP), the sparse signal ${\bf{x}}$ can be reconstructed uniquely and exactly by solving Problem (\ref{Norm_1}).

\begin{defi}
The RIP of order $k$ with constant ${{\delta }_{k}}\in [0,1)$ is satisfied for a matrix ${{\mathbf{A}}_{m\times n}}$ if, for all $k$-sparse vectors ${{\mathbf{x}}_{n\times 1}}$, the following inequality holds \cite{Candes_4}:
\begin{equation}\label{Eqn:RIP}
1-{{\delta }_{k}} \leq \frac{\left\| {{\mathbf{A}}_{m\times n}}{{\mathbf{x}}_{n\times 1}} \right\|_{2}^{2}}{\left\| {{\mathbf{x}}_{n\times 1}} \right\|_{2}^{2}} \leq 1+{{\delta }_{k}}.
\end{equation}
\end{defi}
  In \cite{Baraniuk_6}, it is demonstrated that Gaussian matrices, satisfy the RIP property with high probability. However, the primary drawback of random sensing matrices is the need for element-by-element storage, which demands significant memory resources. This limitation motivates the use of deterministic structures, which only require storing generating functions instead of individual elements.
Verification of the RIP property for deterministic sensing matrices is often challenging. Instead, matrix coherence, defined as follows, is used as a surrogate measure. 
\begin{defi}\label{Coherence}
The coherence of a matrix ${{\mathbf{A}}_{m\times n}}$ with columns $\left[ {{\mathbf{a}}_{1}},\cdots ,{{\mathbf{a}}_{n}} \right]$ is given by:
\begin{equation}\label{Coherence_Equ}
{{\mu }_{\mathbf{A}}} = \underset{i\ne j}{\mathop{\max }}\,\frac{\left| \left\langle {{\mathbf{a}}_{i}},{{\mathbf{a}}_{j}} \right\rangle \right|}{{{\left\| {{\mathbf{a}}_{{{i}_{{}}}}} \right\|}_{2}} \cdot {{\left\| {{\mathbf{a}}_{j}} \right\|}_{2}}}, \quad 1\le i,j\le n.
\end{equation}
The minimum attainable coherence for a given matrix ${{\mathbf{A}}_{m\times n}}$ is bounded below by the Welch bound \cite{Welch}:
\begin{equation}\label{Welch}
{{\mu }_{\mathbf{A}}}\ge \sqrt{\frac{n-m}{m\left( n-1 \right)}}.
\end{equation}
\end{defi}

In \cite{bourgain}, it is shown that a matrix ${{\mathbf{A}}}$ with coherence ${{\mu }_{\mathbf{A}}}$ satisfies RIP of order $k$ with constant $\delta_k\le {{\mu }_{\mathbf{A}}}(k-1)$ for $k<\frac{1}{{{\mu }_{\mathbf{A}}}}+1$. Thus, to achieve RIP for higher orders, it is necessary to design deterministic sensing matrices with low coherence \cite{tong2021deterministic, mohades2016non}.

An important class of deterministic sensing matrices is binary constructions, which enable multiplier-less dimensionality reduction. Moreover, binary sensing matrices often lead to efficient algorithms due to their sparse structure \cite{Indyk}. Many deterministic structures are based on finite fields. 

In a seminal work by DeVore \cite{DeVore}, binary deterministic sensing matrices of size $p^2 \times p^{r+1}$ and coherence $\frac{r}{p}$ were proposed using polynomials of degree $r$ whose coefficients belong to the finite field $\mathbb{F}_p$, where $p$ is a prime number. Inspired by DeVore's structure, further binary constructions were introduced in \cite{Ali, gan2022non}. Binary sensing matrices with block structures can be found in \cite{Euler}. Additionally, extremal set theory has been used to construct binary sensing matrices, as shown in \cite{Extremal}.
The authors of \cite{xia2014deterministic} developed binary deterministic sensing matrices based on LDPC codes. Similar constructions can be found in \cite{li2014deterministic}, where finite geometry was utilized to construct sparse sensing matrices.
Other binary deterministic sensing matrices constructed based on LDPC codes can be found in \cite{haiqiang}. In \cite{mohades2019general}, a general approach for constructing binary deterministic sensing matrices was proposed using column replacement. 

Apart from binary constructions, there exist a wide variety of deterministic sensing matrices. In \cite{MohadesTensor} and \cite{abin2023explicit}, some asymptotically optimal complex-valued deterministic sensing matrices, in terms of the Welch bound, were proposed. However, complex-valued structures require significantly more operations for dimensionality reduction compared to binary matrices, making them less practical in many cases.

The introduced matrices face a significant limitation of having special sizes, restricting their applicability and performance across diverse applications. Even the method in \cite{mohades2019general}, which combines different matrices to construct sensing matrices, suffers from the same limitation.

Overcoming this limitation serves as the primary motivation for proposing an optimization-based method to construct deterministic sensing matrices. Additionally, achieving multiplier-less dimensionality reduction and exploiting the sparse structure of binary sensing matrices, which facilitates low-complexity algorithms \cite{Indyk}, further motivates the development of binary sensing structures.

To generate binary deterministic sensing matrices, we propose an optimization problem focusing on minimizing the Hamming distance between the columns of the sensing matrix. The optimization problem enforces the following constraints on the sensing matrix:
\begin{enumerate}
    \item The matrix must have binary values.
    \item The matrix must exhibit low coherence.
    \item The matrix must be column-regular.
\end{enumerate}
The latter constraint encourages working on manifolds, as will be discussed in detail in subsequent sections.

Our approach for constructing binary matrices offers the following advantages over other methods:
\begin{enumerate}
    \item[$\rm{i}$] The matrices can have arbitrary sizes, overcoming the dimensional limitations of other methods.
    \item[$\rm{ii}$] The columns of our matrices are not constrained by the linearity of codes. Specifically, most deterministic sensing matrices, particularly binary ones, are constructed from matrices whose columns correspond to linear codewords, which restricts the number of sensing matrix columns. 
    \item[$\rm{iii}$] Our proposed method allows flexibility in manipulating the sensing matrices, including enhancing sparsity by penalizing the number of nonzero elements in the rows.
\end{enumerate}
After formulating the optimization problem, we present an iterative algorithm for solving it and provide a convergence proof.\\
The remainder of the paper is organized as follows. In Section \ref{Main_results}, we present our optimization formulation, provide analytical discussions, and prove relevant theorems. Simulation results are illustrated in Section \ref{Simulation}. We conclude the paper in Section \ref{Conclusion}. Additionally, in Appendix \ref{Appendix}, we provide the required preliminaries used in explaining our main results.
\section{Main Result} \label{Main_results}

As highlighted in Section \ref{Introduction}, an effective deterministic sensing matrix requires minimal coherence. Additionally, for the binary construction under consideration, all columns must have an identical number of nonzero elements. Based on these criteria, we propose the following optimization problem:
\begin{equation} \label{Proposed_1}
\begin{aligned}
& \underset{{\mathbf{a}_i}, {\mathbf{a}_j} \in \mathbb{R}^m}{\min} \,\, \underset{i \neq j}{\max} \,\, \frac{\left| \langle \mathbf{a}_i, \mathbf{a}_j \rangle \right|}{\|\mathbf{a}_i\|_2 \|\mathbf{a}_j\|_2}, \\
& \text{s.t.}
\,\,\,\,\,\,\text{C1: } \mathbf{a}_l \odot (\mathbf{a}_l - \mathbf{1}) = \mathbf{0}, \,\, l = 1, \ldots, n, \\
&\,\,\,\,\,\,\,\,\,\,\,\,\, \text{C2: } \sum_k \frac{a_{kl}}{r} = 1, \,\, l = 1, \ldots, n,
\end{aligned}
\end{equation}
where \( i, j = 1, \ldots, n \), \( k = 1, \ldots, m \), \( a_{ki} \) denotes the \( k \)-th element of \(\mathbf{a}_i\), the \( i \)-th column of matrix \(\mathbf{A}\), \( r \) is the column weight, \(\|\cdot\|_2\) represents the \( \ell_2 \)-norm, \(\langle \cdot, \cdot \rangle\) is the inner product, and \(\odot\) denotes the element-wise product. 

The goal of Problem (\ref{Proposed_1}) is to minimize the coherence of matrix \(\mathbf{A}\). Condition \(\text{C1}\) enforces binary matrix elements, while \(\text{C2}\) ensures column regularity. Notably, \(\text{C2}\) implies that each column can be interpreted as a probability distribution (after normalization by \(r\)). Considering this, the problem can be reformulated over a statistical manifold\ref{StatisticalManifold}, leading to:
\begin{equation} \label{Proposed_2}
\begin{aligned}
& \underset{\frac{\mathbf{a}_i}{r}, \frac{\mathbf{a}_j}{r} \in S_m}{\min} \,\, \underset{i \neq j}{\max} \,\, \frac{\left| \langle \mathbf{a}_i, \mathbf{a}_j \rangle \right|}{\|\mathbf{a}_i\|_2 \|\mathbf{a}_j\|_2}, \\
& \text{s.t.}\,\,\,\,\, \text{C1: } \mathbf{a}_l \odot (\mathbf{a}_l - \mathbf{1}) = \mathbf{0}, \,\, l = 1, \ldots, n.
\end{aligned}
\end{equation}
Relaxing \(\text{C1}\) allows for a transition from discrete to continuous space, facilitating optimization via differential geometry:
\[
\text{C1}': \mathbf{a}_l^T (\mathbf{a}_l - \mathbf{1}) = 0.
\]
Consequently, Problem (\ref{Proposed_2}) transforms into:
\begin{equation} \label{Proposed_3}
\begin{aligned}
& \underset{\frac{\mathbf{a}_i}{r}, \frac{\mathbf{a}_j}{r} \in S_m}{\min} \,\, \underset{i \neq j}{\max} \,\, \frac{\left| \langle \mathbf{a}_i, \mathbf{a}_j \rangle \right|}{\|\mathbf{a}_i\|_2 \|\mathbf{a}_j\|_2}, \\
& \text{s.t.} \,\,\,\,\,\, \text{C1}': \mathbf{a}_l^T (\mathbf{a}_l - \mathbf{1}) = 0, \,\, l = 1, \ldots, n.
\end{aligned}
\end{equation}
To ensure smooth optimization, the \(\max\) function in (\ref{Proposed_3}) is approximated using the smooth function \(M_\alpha\):
\[
M_\alpha(\mathbf{x}) = \frac{\sum_{i=1}^m x_i e^{\alpha x_i}}{\sum_{i=1}^m e^{\alpha x_i}}, \quad \text{where } \mathbf{x} = (x_1, \ldots, x_m),
\]
with \( M_\alpha(\mathbf{x}) \to \max(\mathbf{x}) \) as \(\alpha \to \infty\). Using this approximation, the smooth reformulation of (\ref{Proposed_3}) becomes:
\begin{equation} \label{Proposed_4}
\begin{aligned}
& \underset{\mathbf{b}_i, \mathbf{b}_j \in S_m}{\min} \,\, \frac{\sum_{i \neq j} r^2 \gamma_{ij} e^{\alpha r^2 \gamma_{ij}}}{\sum_{i \neq j} e^{\alpha r^2 \gamma_{ij}}}, \\
& \text{s.t.}\,\,\,\,\,\,\, \text{C1}': \mathbf{b}_l^T (\mathbf{b}_l - \mathbf{1}/r) = 0, \,\, l = 1, \ldots, n,
\end{aligned}
\end{equation}
where \(\mathbf{b}_i = \mathbf{a}_i / r\), \(\mathbf{1}/r\) is a vector with elements equal to \(1/r\), and \(\gamma_{ij} = \langle \mathbf{b}_i, \mathbf{b}_j \rangle\). Due to \(\mathbf{b}_l \in S_m\), \(\text{C1}'\) simplifies to:
\begin{equation} \label{Condition_C1}
\text{C1}'': \mathbf{b}_l^T \mathbf{b}_l = 1/r, \,\, l = 1, \ldots, n.
\end{equation}
Thus, Problem (\ref{Proposed_4}) is reformulated as:
\begin{equation} \label{Proposed_5}
\begin{aligned}
& \underset{\mathbf{b}_i, \mathbf{b}_j \in S_m}{\min} \,\, \frac{\sum_{i \neq j} r^2 \gamma_{ij} e^{\alpha r^2 \gamma_{ij}}}{\sum_{i \neq j} e^{\alpha r^2 \gamma_{ij}}}, \\
& \text{s.t.}\,\,\,\,\,\,\, \text{C1}'': \mathbf{b}_l^T \mathbf{b}_l = 1/r, \,\, l = 1, \ldots, n.
\end{aligned}
\end{equation}
To reformulate Problem (\ref{Proposed_5}) as an unconstrained manifold optimization problem, we employ Theorem~\ref{Submersion_Theorem} and propose the following proposition.
\begin{proposition}\label{Proposition_Ours_1}
Assume that the set $ES_m$ contains those points of $S_m$ which satisfy the condition $C1''$ of Problem (\ref{Proposed_5}), {\it{i.e.}} $\{{\bf{x}}\in ES_m| {\bf{x}}\in S_m\,\,\, and \,\,\,{\bf{x}}^T{\bf{x}}={{\frac{1}{r}}}\}$. Then, $ES_m$ is an embedded submanifold of $S_m$ with dimension of $m-1$. 
\end{proposition}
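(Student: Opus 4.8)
The plan is to realize $ES_m$ as a regular level set of a smooth scalar function on $S_m$ and then invoke the submersion theorem, Theorem~\ref{Submersion_Theorem}. Concretely, I would introduce $g\colon S_m\to\mathbb{R}$ defined by $g(\mathbf{x})=\mathbf{x}^T\mathbf{x}=\|\mathbf{x}\|_2^2$, so that $ES_m=g^{-1}(1/r)$ by the very definition in the statement. Smoothness of $g$ is immediate: the statistical manifold $S_m$ may be regarded as an open embedded submanifold of the affine hyperplane $\{\mathbf{x}\in\mathbb{R}^m:\mathbf{1}^T\mathbf{x}=1\}\subset\mathbb{R}^m$, and $g$ is the restriction to $S_m$ of the polynomial $\mathbf{x}\mapsto\sum_k x_k^2$. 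The substantive task is thus to verify that $1/r$ is a regular value of $g$.

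For this I would compute the differential of $g$ intrinsically on $S_m$. The tangent space at any point is $T_{\mathbf{x}}S_m=\{\mathbf{v}\in\mathbb{R}^m:\mathbf{1}^T\mathbf{v}=0\}=\mathbf{1}^{\perp}$, and for $\mathbf{v}\in T_{\mathbf{x}}S_m$ one gets $dg_{\mathbf{x}}(\mathbf{v})=2\,\mathbf{x}^T\mathbf{v}$. Hence $dg_{\mathbf{x}}$ fails to be surjective exactly when $\mathbf{x}^T\mathbf{v}=0$ for all $\mathbf{v}\in\mathbf{1}^{\perp}$, i.e.\ when $\mathbf{x}\in(\mathbf{1}^{\perp})^{\perp}=\operatorname{span}\{\mathbf{1}\}$; together with $\mathbf{1}^T\mathbf{x}=1$ this pins down the unique critical point $\mathbf{x}^{\star}=\tfrac1m\mathbf{1}$, the barycentre of the simplex. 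Since $g(\mathbf{x}^{\star})=\tfrac1m$ while the column weight satisfies $r\neq m$ (indeed $r\ll m$ in the low-density regime), we have $g(\mathbf{x}^{\star})=\tfrac1m\neq\tfrac1r$, so $\mathbf{x}^{\star}\notin ES_m$; consequently every point of $ES_m$ is a regular point of $g$ and $1/r$ is a regular value. Theorem~\ref{Submersion_Theorem} then yields that $ES_m=g^{-1}(1/r)$ is an embedded submanifold of $S_m$ of codimension $\operatorname{rank}dg_{\mathbf{x}}=1$, i.e.\ $\dim ES_m=\dim S_m-1$, which is the dimension asserted in the proposition. Along the way I would record nonemptiness: on $S_m$ the map $\|\mathbf{x}\|_2^2$ attains every value in $[\tfrac1m,1)$, so any admissible weight $1<r\le m$ makes the level $1/r$ attainable.

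The main obstacle --- in fact essentially the only non-routine point --- is the regular-value verification just described: isolating the single critical point of $\mathbf{x}\mapsto\mathbf{x}^T\mathbf{x}$ on $S_m$ and observing that it is excluded from $ES_m$ precisely because $r\neq m$. Once that is in place, smoothness of $g$, the embedded property, and the dimension count all follow directly from Theorem~\ref{Submersion_Theorem}. If one prefers not to work intrinsically on $S_m$, an equivalent route is to describe $ES_m$ as the common zero set in $\mathbb{R}^m$ of $h_1(\mathbf{x})=\mathbf{1}^T\mathbf{x}-1$ and $h_2(\mathbf{x})=\mathbf{x}^T\mathbf{x}-1/r$ and to check that the $2\times m$ Jacobian with rows $\mathbf{1}^T$ and $2\mathbf{x}^T$ has rank $2$ at every point of $ES_m$, which is again equivalent to linear independence of $\mathbf{1}$ and $\mathbf{x}$ there, hence to $r\neq m$; this presents $ES_m$ as an embedded submanifold of $\mathbb{R}^m$ of the same dimension, but the intrinsic version is the cleaner path to the statement as phrased.
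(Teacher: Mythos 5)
Your proof is correct and follows the same overall strategy as the paper: both realize $ES_m$ as a level set of the scalar function $\mathbf{x}\mapsto\mathbf{x}^T\mathbf{x}-\tfrac1r$ on $S_m$ and invoke Theorem~\ref{Submersion_Theorem} to get an embedded submanifold of codimension one. The execution differs in two ways worth noting. First, the paper works extrinsically through explicit charts $(\varphi_i,\mathcal{U}_i)$, writes down the coordinate representation $\hat F=F\circ\varphi_i^{-1}$, and differentiates that polynomial; you work intrinsically with $T_{\mathbf{x}}S_m=\mathbf{1}^{\perp}$ and $dg_{\mathbf{x}}(\mathbf{v})=2\,\mathbf{x}^T\mathbf{v}$, which is shorter and avoids the bookkeeping of the eliminated coordinate. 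Second, and more substantively, the paper simply asserts that the differential of $\hat F$ ``is a nonzero vector'' and hence that the rank is constantly $1$; this assertion is not actually verified, and in fact the differential \emph{does} vanish at the barycentre of the simplex (where all coordinates are equal), which is precisely the one critical point of $\mathbf{x}\mapsto\|\mathbf{x}\|_2^2$ on $S_m$. Your argument is the careful version of this step: you isolate that critical point, note $g(\mathbf{x}^{\star})=\tfrac1m$, and exclude it from the level set $g^{-1}(1/r)$ using $r\neq m$, thereby making the constant-rank hypothesis of Theorem~\ref{Submersion_Theorem} genuinely hold in a neighborhood of $ES_m$. In short, your route buys rigor at exactly the point the paper glosses over, at the cost of having to record the (mild, and physically reasonable) hypothesis $r\neq m$ that the proposition as stated leaves implicit. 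One small caveat: your closing nonemptiness remark allows $r\le m$, but at $r=m$ the level set degenerates to the single critical point, so the admissible range for the proposition should be $1<r<m$.
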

\begin {proof}
Consider the function $F:S_m\to\mathbb{R}$, where $F({\bf{x}})={\bf{x}}^T{\bf{x}}-{{\frac{1}{r}}}$. By using Definition \ref{Rank}, it can be seen that the rank of the function $F$ is constant and equal to $1$. To see this, let us define the atlases $\mathcal{A}_1$ and $\mathcal{A}_2$ for manifolds $S_m$ and $\mathbb{R}$, respectively. $\mathcal{A}_1$ is a collection of charts $\{(\varphi_i,\mathcal{U}_i)\}$ for $i=0,1,...,m$, where $\varphi_i:\mathcal{U}_i\to D_m$ is defined as $\varphi_i({\bf{x}})=(x_0,x_1,\cdots,x_{i-1},\widehat{x}_i,x_{i-1},\cdots,x_m)$, where $\widehat{x}_i$ stands for elimination of ${x}_i$ and the open subset $D_m$ is defined as $D_m=\{{\bf{y}}\in \mathbb{R}^m| y_i>0\,\,\,\, \& \,\,\sum_{i=0}^{m-1}y_i<1\}$. Moreover, the mapping $\varphi_i^{-1}:D_m\to \mathcal{U}_i$ is defined as $\varphi_i^{-1}({\bf{y}})=(y_0,y_1,\cdots,y_{i-1},1 - \sum_{i = 0}^{m - 1} {{y_i}},y_{i},\cdots,y_{m-1})$. It is easy to see that both $\varphi_i$ and $\varphi_i^{-1}$ are not only continuous mappings but also infinitely differentiable mappings. Furthermore, $\mathcal{A}_1$ can be obviously presented only using the chart $(\iota,\mathbb{R})$, where $\iota$ is the identity map from $\mathbb{R}$ to $\mathbb{R}$. Accordingly, using defintion \ref{Rank} the function $F:S_m\to\mathbb{R}$ is smooth, if $\hat F = \iota \circ F \circ \varphi_i^{ - 1}=F \circ \varphi_i^{ - 1}:{\mathbb{R}^{{m}}} \to {\mathbb{R}}$ is smooth for any chart $(\varphi_i,\mathcal{U}_i)$, $i=0,1,...,m$. The function $\hat F$ is easily seen to be smooth for any chart $(\varphi_i,\mathcal{U}_i)$, due to it is a polynomial. Moreover, to find the rank of $\hat F$, we know that $\hat F(\varphi({\bf{x}}))={\bf{z}}^T{\bf{z}}-{{\frac{1}{r}}}$ where ${\bf{z}}=(x_0,x_1,\cdots,x_{i-1},1 - \sum_{j = 0 , j \ne i\hfill}^m {{x_j}},x_{i+1},\cdots,x_{m})$. Consequently, the differential of $\hat F$ is $( {x_0^{},x_1^{}, \cdots ,x_{i - 1}^{},{ - 2m + 2m\sum_{{j = 0 , j \ne i\hfill}}^m {{x_j}} } ,x_{i + 1}^{}, \cdots ,x_m^{}} )$, which is a nonzero vector. Therefore, the rank of $\hat F$ is constant and equal to 1. Using the mentioned arguments and applying Theorem \ref{Submersion_Theorem}, it is observed that $ES_m$ is an embedded submanifold of $S_m$ with dimension of $m-1$.
\end{proof}
\begin{center} \label{Line_Search_Absil}
\resizebox{0.45\textwidth}{!}{%
\begin{tabular}{|c|}
\hline
\hspace{-.2cm}\textbf{Alg. 1}: Manifold Gradient descent \\
\hline
\hspace{-.1cm}\textbf{Requirements}: Differentiable cost function $f$, Manifold $\mathcal{M}$,\\ \hspace{-.6cm}inner product $\left<\cdot,\cdot\right>$, Initial matrix ${\bf{B}}_0 \in \mathcal{M}$, Retraction \\ function $R$, Scalars $\bar{\alpha}>0, \, \, \beta,\sigma\in(0,1)$, tolerance $\tau>0$.\\
\hspace{-5.7cm}{\bf{for}} $i=0,1,2,...$ {\bf{do}}\\
\hspace{-.4cm}\textbf{Step 1}: \,\,\,\, Set $\xi$ as the negative direction of the gradient, \\
\hspace{-2.6cm}$\xi_i:=-{\rm{grad}}f({\bf{B}}_i)$\\
\hspace{-3.8cm}\textbf{Step 2}:\,\,\,\, Convergence evaluation,\\
\hspace{-2cm}\,\,\,\, \textbf{if} $\left\|\xi_i\right\| < \tau$, \textbf{then break}\\
\hspace{-3.00cm}\textbf{Step 3}:\,\,\,\, Find the smallest $m$ satisfying\\
\,\,\,\,\,\,\,\, $f({\bf{B}}_i)-f(R_{{\bf{B}}_i}(\bar{\alpha}\beta^m\xi_i))\geq \sigma\bar{\alpha}\beta^m \left<\xi_i,\xi_i\right>_{{\bf{B}}_i}$\\
\hspace{-3.2cm}\textbf{Step 4}:\,\,\,\,\,\,\,\,\, Find the modified point as \\
\,\,\,\,\,\,\,\,\,\, ${\bf{B}}_{i+1}:=R_{{\bf{B}}_i}(\bar{\alpha}\beta^m\xi_i))$\\
\hline
\end{tabular}
}
\end{center}
Using proposition \ref{Proposition_Ours_1}, (\ref{Proposed_5}) can be rewritten as follows,
\begin{equation} \label{Proposed_6}
\begin{array}{l}
\mathop {\min }\limits_{{{\bf{b}}_i},{{\bf{b}}_j}\in {ES}{_m}} f({\bf{B}})=\mathop \frac{{\sum_{i \ne j} {{r^2\gamma _{ij}}{e^{\alpha r^2{\gamma _{ij}}}}} }}{{\sum_{i \ne j} {{e^{\alpha r^2{\gamma _{ij}}}}} }},
\end{array}
\end{equation}
where ${\gamma _{ij}} = \left\langle {{{\bf{b}}_i},{{\bf{b}}_j}} \right\rangle$, $i, j=1,\cdots,n$ and ${\bf{B}}$ is a matrix whose $i$-th column is ${\bf{b}}_i$. Note that ${\bf{B}}$ belongs to the product manifold ${ES}{_m^n}$ which is defined as ${ES}{_m^n}={ES}{_m}\times{ES}{_m}\times\cdots\times{ES}{_m}$, wherein $\times$ stands for the Cartesian product.\\To solve (\ref{Proposed_6}), gradient descent (GD) approach can be applied. Alg. 1 presents a common manifold GD method \cite{Absil_book},\cite{mohades2021efficient},\cite{mohades2019haplotype}. Steps 1 sets the search direction. Step 2 evaluates the convergence criterion. Step 3 guarantees that the sequence of points is a descent direction on the manifold\cite{Absil_book}. Step 4 retracts the updated point to the manifold.\\
Now, the convergence of Alg. \ref{Line_Search_Absil} for (\ref{Proposed_6}) is proved.
\begin {theorem} \label {Convergence_Proof}
By choosing the initial matrix ${\bf{B}}_0$ belonging to the manifold $ES_m^n$, Alg. 1 will converge for (\ref{Proposed_6}).
\end{theorem}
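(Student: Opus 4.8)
The plan is to invoke the standard convergence theory for Riemannian gradient descent with Armijo line search (e.g.\ \cite{Absil_book}), whose hypotheses are: (i) the cost function $f$ is smooth (at least $C^1$) on the manifold, (ii) the manifold is such that the iterates remain in a set on which $f$ is bounded below, and (iii) the retraction $R$ is well defined on that set. Once these are verified, the theorem follows because every accumulation point of the sequence $\{{\bf{B}}_i\}$ is then a critical point of $f$, and the stopping criterion in Step~2 guarantees the algorithm halts in finitely many iterations (or produces a gradient-norm-vanishing subsequence in the limit). So the real work is checking that Problem~(\ref{Proposed_6}) fits this framework.

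First I would observe that $ES_m$ is an embedded submanifold of $S_m$ (hence of $\mathbb{R}^m$) by Proposition~\ref{Proposition_Ours_1}, and therefore the product $ES_m^n$ is an embedded submanifold of $\mathbb{R}^{m\times n}$; it carries the Riemannian metric induced from the Euclidean inner product, and it is a closed, bounded — hence compact — subset of $\mathbb{R}^{m\times n}$ (each column has fixed norm $1/\sqrt{r}$ and lies in the closure of the simplex scaled by $r$, so the set is closed and norm-bounded). Compactness is the key structural fact: it immediately gives that $f$ is bounded below on $ES_m^n$, that the sublevel set $\{{\bf{B}} : f({\bf{B}}) \le f({\bf{B}}_0)\}$ is compact, and — taking the retraction $R$ to be, say, metric projection or the exponential map, which is globally defined on a compact manifold — that all iterates stay in this compact sublevel set. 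Next I would verify smoothness of $f$: the quantities $\gamma_{ij} = \langle {\bf{b}}_i,{\bf{b}}_j\rangle$ are polynomials in the entries of ${\bf{B}}$, the exponential $e^{\alpha r^2\gamma_{ij}}$ is smooth, and the denominator $\sum_{i\ne j} e^{\alpha r^2 \gamma_{ij}}$ is a sum of strictly positive terms, hence never zero; therefore $f$ is a composition of smooth functions with nonvanishing denominator and is $C^\infty$ on an open neighborhood of $ES_m^n$ in $\mathbb{R}^{m\times n}$, so its restriction to the embedded submanifold is smooth and ${\rm grad}\, f$ is well defined and continuous.

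With smoothness and compactness in hand, I would then quote the Armijo-type global convergence result for line-search methods on manifolds \cite{Absil_book}: under the sufficient-decrease (Armijo) condition enforced in Step~3 together with the descent direction $\xi_i = -{\rm grad}\,f({\bf{B}}_i)$ chosen in Step~1, every limit point of $\{{\bf{B}}_i\}$ is a stationary point of $f$, i.e.\ ${\rm grad}\,f({\bf{B}}_i)\to 0$ along a subsequence; combined with the finite-termination test $\|\xi_i\|<\tau$ in Step~2, the algorithm terminates. One technical point to address is that Step~3 is well posed — that a finite $m$ satisfying the Armijo inequality always exists; this follows from the standard argument that for a $C^1$ cost and a retraction, the pullback $t\mapsto f(R_{{\bf{B}}_i}(t\xi_i))$ has derivative $-\|\xi_i\|^2<0$ at $t=0$ whenever $\xi_i\ne 0$, so sufficiently small steps $\bar\alpha\beta^m$ satisfy the decrease condition.

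I expect the main obstacle (and the only place genuine care is needed) to be the verification that the iterates remain in a region where everything is controlled — specifically, that $ES_m^n$ is a \emph{complete} Riemannian manifold (equivalently here, compact) so that the retraction and the line search never "fall off" the manifold, and that $f$ is bounded below on the relevant sublevel set. Everything else is bookkeeping: smoothness of $f$ is routine since it is a ratio of exponential-polynomial expressions with strictly positive denominator, and the convergence conclusion is then a direct citation of the manifold line-search theory. If one wants to be fully rigorous about compactness one should note that $ES_m$, as defined, is the intersection of the (closure of the) probability simplex scaled so columns sum to $r$ with the sphere of radius $1/\sqrt r$ — a closed and bounded subset of Euclidean space — and that Proposition~\ref{Proposition_Ours_1} guarantees it is in addition a smooth $(m-1)$-dimensional embedded submanifold, so the Hopf–Rinow theorem applies to give geodesic completeness.
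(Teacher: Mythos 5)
Your proposal follows essentially the same route as the paper: both reduce the theorem to the standard Armijo line-search convergence result (Theorem 4.3.1 in \cite{Absil_book}) and then verify that the iterates remain in a compact subset of $ES_m^n$ so that a limit point exists. Your version is somewhat more complete in that it also checks smoothness of $f$ (positive denominator) and well-posedness of Step 3, whereas the paper instead spends its effort arguing that the column-normalizing retraction never encounters the zero matrix because every point of $ES_m^n$ has fixed Frobenius norm; these are minor presentational differences, not a different proof.
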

\begin{proof}
The limit point of the infinite sequence $\{{\bf{B}}_i\}$ generated by Alg. 1 is a critical point of $f(\bf{B})$ in (\ref{Proposed_6}); {\it{i.e.}} $\mathop {\lim }\limits_{i \to \infty } {\left\| {{\rm{grad}}f\left( {{{\bf{B}}_i}} \right)} \right\|_F} = 0$ (see Theorem 4.3.1 in \cite{Absil_book}). Therefore, to show the convergence of Alg. 1 for (\ref{Proposed_6}), we should illustrate that $\{{\bf{B}}_i\}$ owns a limit point belonging to the manifold $ES_m^n$. For this purpose, we prove that $\{{\bf{B}}_i\}$ lies in a compact set wherein for any sequence there exists a convergent subsequence \cite{kreyszig}.\\
A compact set in a finite dimension space in which any two distinct points can be separated by disjoint open sets is bounded and closed \cite{kreyszig}. Due to $ES_m^n$ is bounded, the set containing $\{{\bf{B}}_i\}$ is bounded.
Now, we require to show that the set containing $\{{\bf{B}}_i\}$ is closed under the condition of Theorem.
Retraction function $R$ brings back the updated point to the manifold $ES_m^n$ by normalizing the columns of the matrix. This can be done for all bounded matrices except the all-zero matrix. Therefore, we should prove that the zero matrix will not be generated. Assume that the zero matrix ${\bf{0}}_{m \times n}$ is a convergent point, then we should have 
$\left\| {{{\bf{B}}_i} - {{\bf{0}}_{m \times n}}} \right\|_F^2 < \varepsilon $ for some $i>k$, where $\epsilon$ is an arbitrarily small positive value, $k$ is positive integer and the subscript $F$ stands for the Frobenius norm. This is a contradiction, because, the Frobenius norm of any matrix belonging to the manifold $ES_m^n$ is equal to ${\raise0.7ex\hbox{$n$} \!\mathord{\left/
{\vphantom {n {{m^2}}}}\right.\kern-\nulldelimiterspace}
\!\lower0.7ex\hbox{${{r}}$}}$ due to Proposition \ref{Proposition_Ours_1}. Hence, Alg. 1 converges for (\ref{Proposed_6}).
\end{proof}
\begin{remark}
Now, to construct the binary deterministic matrix, the elements of the generated matrix must be mapped to binary values. This is achieved using a simple thresholding method. Since each desired sensing matrix should be column-regular, the $r$ largest values in each column are mapped to $1$, while the remaining values are mapped to $0$.
\end{remark}
\section {Simulation Results} \label{Simulation}
Due to the absence of binary deterministic sensing matrices of arbitrary sizes, in order to illustrate the well performance
of our proposed matrices with respect to the existing binary
sensing matrices, we limit the size of our proposed matrices to specific cases of $p^2 \times p^4$, where $p$ is a prime power. This choice enables comparison with DeVore's construction. Additionally, random binary matrices of the same size are generated for further comparison.
\begin{figure}[htbp]
  \centering
\begin{minipage}[b]{0.4\textwidth}
    \includegraphics[width=\textwidth]{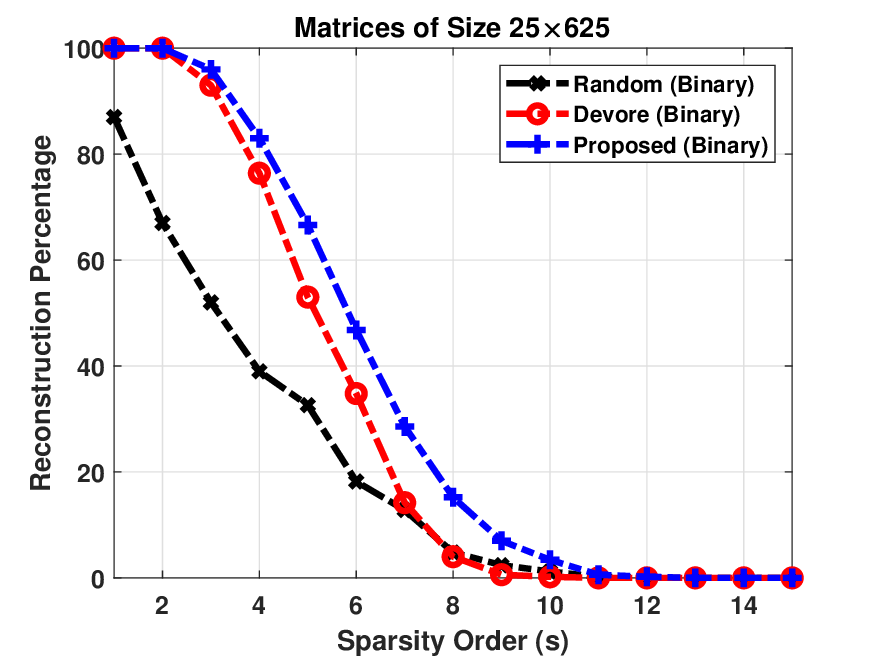}
  \caption{\label{Percentage_Sparsity25625}Reconstruction percentage versus sparsity order.}
 \end{minipage}
\end{figure}
%
The results are derived from the average of 2000 independent trials for various $k$-sparse signals. In each scenario, the performance of the measurement matrices is assessed based on recovery percentage and output SNR, while the sparsity order increases and the input SNR ranges from 0 to 100 dB. The reconstruction algorithm employed is Orthogonal Matching Pursuit, which is effective for solving $\ell_1$-minimization problems.
\begin{figure}[htbp]
  \centering
\begin{minipage}[b]{0.4\textwidth}
 \includegraphics[width=\textwidth]{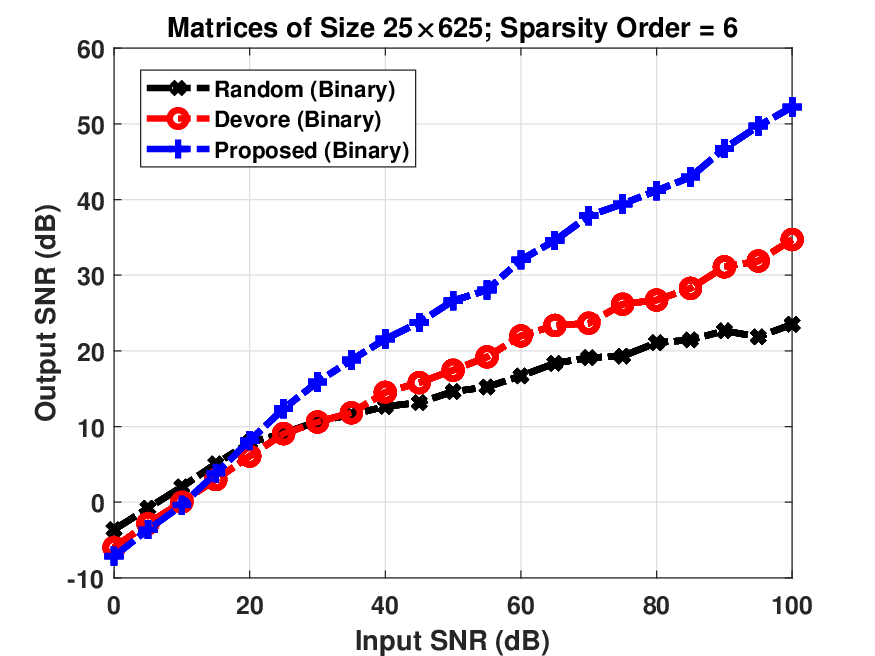}
 \caption{\label{SNR_SNR_25625}Reconstruction SNR versus input SNR.}
  \end{minipage}
\end{figure}
For the initial setup, we set $p = 5$, yielding sensing matrices of size $25 \times 625$. The corresponding results are shown in Figs.~1 and 2. In Fig.~1, the sparsity order ranges from 1 to 15, whereas in Fig.~2, it is fixed at 6. The results demonstrate that the proposed sensing matrix consistently outperforms existing ones in terms of recovery percentage and output SNR.
\begin{figure}[htbp]
  \centering
\begin{minipage}[b]{0.4\textwidth}
 \includegraphics[width=\textwidth]{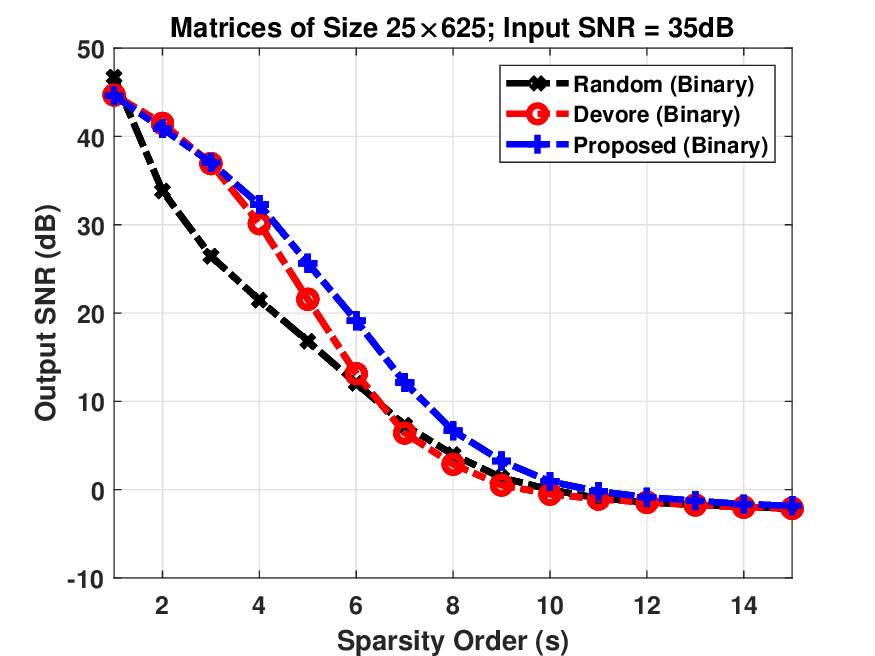}
 \caption{\label{SNR_SNR_25625}Reconstruction SNR versus sparsity order.}
  \end{minipage}
\end{figure}
In the final scenario, the input SNR is held constant at $35\,\mathrm{dB}$, while the sparsity order is varied from 1 to 15 to compute the output SNR. The results, illustrated in Fig.~3, further confirm that the proposed sensing matrices exhibit superior performance compared to other matrices in terms of output SNR.
\section {Conclusion} \label{Conclusion}
Manifold optimization was employed to design low-coherence binary deterministic sensing matrices of arbitrary sizes. To achieve this, an iterative algorithm was developed to minimize the coherence of the sensing matrix over the statistical manifold. The optimization problem was then formulated over the intersection of the statistical manifold and a sphere, with the radius proportional to the number of non-zero elements per column. Due to column regularity, this value is constant across all columns. 
Next, the problem was approximated and reformulated as a smooth optimization problem, solvable using the manifold gradient descent algorithm. The convergence of the proposed algorithm was formally proven. Upon convergence, the optimal solution was used to construct the binary deterministic sensing matrix.

Simulation results demonstrated that the proposed algorithm is not only flexible in generating binary sensing matrices but also reliable for sampling and reconstructing sparse signals, as evidenced by superior recovery percentages and output SNRs.
\renewcommand{\theequation}{A\arabic{equation}}
  \setcounter{equation}{0}  
\appendices
\section{Defintions}  \label{Appendix}
\begin{defi} \label{Tangent_space}
Let $S$ be a manifold.  Then, $T_{{\bf{z}}}{S}$ is the set of all tangent vectors at point ${{\bf{z}}}\in{S}$ and called the tangent space. Moreover, tangent bundle is defined as the disjoint collection of tangent spaces and denoted by $T{S}$\cite{Absil_book}.
\end{defi}
\begin {defi} \label {Chart}
Let $\mathcal{M}$ be a second countable manifold. Then, it contains a collection of charts $\left(\mathcal{U}_\alpha,\varphi_\alpha \right)$, where $\mathcal{U}_\alpha$'s are subsets of $\mathcal{M}$ satisfying $\cup_\alpha\mathcal{U}_\alpha=\mathcal{M}$, and  $\varphi_\alpha$ is a continuous mapping, whose inverse is also continuous, between  $\mathcal{U}_\alpha$  and an open subset of  $\mathbb{R}^d$. Moreover, for any pair $\alpha, \beta $ with $\mathcal{U}_\alpha \cap \mathcal{U}_\beta \neq \emptyset$, the sets $\varphi_\alpha\left(\mathcal{U}_\alpha \cap \mathcal{U}_\beta\right)$ and $\varphi_\beta\left(\mathcal{U}_\alpha \cap \mathcal{U}_\beta\right)$ are open subsets in $\mathbb{R}^d$ and 	${\varphi _\beta } \circ \varphi _\alpha ^{ - 1}:{\mathbb{R}^d} \to {\mathbb{R}^d}$ is infinitely differentiable on its domain, also known as smooth \cite {Absil_book}.
\end{defi}
\begin {defi}  \label{Rank}
Let $F:\mathcal{M}_1\to\mathcal{M}_2$ be a functoin from the $d_1$ dimensional manifold $\mathcal{M}_1$ to the manifold $\mathcal{M}_2$ with dimension of $d_2$. Then, $F$ is differentiable (smooth) at $x$ if $\hat F = {\psi} \circ F \circ \varphi^{ - 1}:{\mathbb{R}^{{d_1}}} \to {\mathbb{R}^{{d_2}}}$ is differentiable (smooth) at $\varphi(x)$, where $(\varphi,\mathcal{U})$ and $(\psi,\mathcal{V})$ are charts around $x$ and $\varphi(x)$, respectively.
Moreover, the rank of $F$ at point $x\in\mathcal{M}$ is defined as the dimension of the image of  the differential of $\hat F$ at $\varphi(x)$ \cite{Absil_book}.
\end{defi}
\begin {theorem} \label{Submersion_Theorem}
Let $F:\mathcal{M}_1\to\mathcal{M}_2$ be a smooth mapping between two manifolds of dimension $d_1$ and $d_2$.
Then, for $y\in F(\mathcal M)$, $F^{-1}(y)$ is called an embedded submanifold of $\mathcal{M}_1$ of dimension  $d_1-k$, provided that $F$ has a constant rank $k<d_1$ in a neighborhood of $F^{-1}(y)$  \cite{Absil_book}.
\end{theorem}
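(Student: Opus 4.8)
The plan is to deduce this statement from the \emph{constant-rank normal form}: a smooth map of locally constant rank $k$ can, after suitable choices of charts on the source and the target, be written as a linear coordinate projection. Once this local model is available, the fibre $F^{-1}(y)$ becomes a coordinate slice in every such chart, and the slice-chart characterisation of embedded submanifolds finishes the argument. I would therefore take the rank theorem as the principal lemma and devote the body of the proof to the reduction.

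First I would fix an arbitrary point $p\in F^{-1}(y)$. By hypothesis $F$ has constant rank $k$ throughout a neighbourhood of $F^{-1}(y)$, hence in particular near $p$. Invoking the rank theorem, I would produce a chart $(\mathcal{U},\varphi)$ about $p$ and a chart $(\mathcal{V},\psi)$ about $y=F(p)$, with $\varphi(p)=0$ and $\psi(y)=0$, in which the coordinate representation $\hat F=\psi\circ F\circ\varphi^{-1}$ (in the sense of Definition \ref{Rank}) reads
\[
\hat F(x_1,\ldots,x_{d_1})=(x_1,\ldots,x_k,0,\ldots,0).
\]
This normal form is precisely where constancy of the rank is used: the differential of $\hat F$ must have image of dimension exactly $k$ at every nearby point, which is what forces the remaining $d_2-k$ output coordinates to be straightened to zero.

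Next I would read off the fibre in these coordinates. Since $\psi(y)=0$, a point of $\mathcal{U}$ lies in $F^{-1}(y)$ exactly when the first $k$ coordinates of $\hat F$ vanish, i.e.
\[
\varphi\bigl(F^{-1}(y)\cap\mathcal{U}\bigr)=\{x\in\varphi(\mathcal{U}) : x_1=\cdots=x_k=0\}.
\]
This is a $(d_1-k)$-dimensional coordinate slice, so $(\mathcal{U},\varphi)$ is a slice chart for $F^{-1}(y)$ adapted at $p$. Because $p$ was arbitrary, such slice charts cover $F^{-1}(y)$; by the slice-chart criterion (every point has a chart of $\mathcal{M}_1$, in the sense of Definition \ref{Chart}, in which the subset appears as a coordinate slice of fixed codimension $k$), the collection of restricted charts forms a smooth atlas making $F^{-1}(y)$ an embedded submanifold of $\mathcal{M}_1$ of dimension $d_1-k$, carrying the subspace topology with a smooth inclusion.

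The main obstacle is the rank theorem itself, which is the only nontrivial input. Proving it from scratch rests on the inverse function theorem: after a linear change of source and target coordinates one arranges the top-left $k\times k$ block of the Jacobian of $\hat F$ to be invertible, uses the first $k$ components of $\hat F$ as new source coordinates (a local diffeomorphism by the inverse function theorem), and then exploits constant rank once more to show that the last $d_2-k$ components become independent of the final $d_1-k$ source coordinates, so that a closing change of target coordinates removes them. Since the paper cites \cite{Absil_book} for this theorem and has already verified, in Proposition \ref{Proposition_Ours_1}, that the relevant $F$ is of constant rank, I would invoke the rank theorem as an established fact and keep the written argument centred on the slice reduction above.
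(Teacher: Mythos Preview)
Your argument is the standard and correct derivation of the constant-rank level-set theorem via the rank (normal form) theorem and the slice-chart criterion. There is nothing to compare it to, however: the paper does not prove this statement. Theorem~\ref{Submersion_Theorem} is listed in the appendix of preliminaries as a quoted result from \cite{Absil_book}, with no accompanying proof; it is merely invoked in the proof of Proposition~\ref{Proposition_Ours_1}. So your proposal supplies a proof where the paper deliberately omits one, and the approach you take is exactly the one found in the cited reference.
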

\begin{defi}\label{StatisticalManifold}
Let $ {\mathcal{M}}_{m+1}$ be the set of all positive measures:
\begin{equation}\label{positive_measures:}
{ {\mathcal{M}}_{m + 1}} = \left\{ {\left. {{\bf{p}} = \left( {{p_0},...,{p_m}} \right)} \right|0 < {p_i},i = 0,...,m} \right\}.
\end{equation}
Then, the subspace $S_{m}$ of $ {\mathcal{M}}_{m+1}$ with property $\sum\limits_{i = 0}^m {{p_i}}  = 1$ is called statistical manifold\cite{amari2009divergence}. 
\end{defi}
\section*{Acknowledgement}  \label{Ack}
This research was in part supported by a grant from IPM (No.1400510023).
\bibliographystyle{ieeetr}
\bibliography{Binary_matrix_References_030902}
\end{document}